\documentclass{article}
\usepackage[margin=1in]{geometry}
\usepackage[onehalfspacing]{setspace}
\usepackage{amsmath,mathtools}
\usepackage{pgfplots}
\usepackage{subcaption}
\usepackage{booktabs}
\usepackage[braket, qm]{qcircuit}
\usepackage{graphicx}
\usepackage{authblk}
\usepackage{xcolor}
\usepackage[linesnumbered,ruled,vlined]{algorithm2e}
\usepackage{qcircuit}
\usepackage{booktabs}
\usepackage{cite}

\usepackage{graphicx}
\usepackage{subcaption}

\SetCommentSty{mycommfont}

\SetKwInput{KwInput}{Input}                
\SetKwInput{KwOutput}{Output}  
\SetKwInput{KwParam}{Parameter}            
\SetKwProg{Fn}{Function}{}{}

\newcommand{\meqref}[1]{\text{Eq}.~\eqref{#1}}
\newcommand{\mref}[1]{Sec.\,\,\!$\ref{#1} $}
\newcommand{\mfig}[1]{Fig.\,\,\!$\ref{#1} $}

\usepackage[english]{babel}

\usepackage{pdflscape}
\usepackage{fancyvrb}
\usepackage{calc}
\usepackage{rotating}
\usepackage{pgf,tikz}
\usepackage{mathrsfs}
\usetikzlibrary{arrows}
\usepackage{mathtools}

\usepackage{listings}
\usepackage[]{qcircuit}
\usepackage{braket}
\usepackage{mathtools}
\usepackage{varwidth}
\usepackage{caption}
\usepackage{subcaption}
\usepackage{graphicx}
\usepackage[export]{adjustbox}

\makeatletter
\def\paragraph{\@startsection{paragraph}{4}%
	\z@\z@{-\fontdimen2\font}%
	{\normalfont\bfseries}}
\makeatother

\usepackage{graphicx}
\usepackage{pgffor}
\usepackage{tikz,tikz-cd}
\usetikzlibrary{backgrounds,fit,decorations.pathreplacing}  
\usepackage{booktabs}
\usepackage{enumerate}

\usepackage{amssymb, amsmath, amsthm}

\usepackage{xypic}
\usepackage{scalerel}
\usepackage{ulem}

\usepackage{graphicx}              
\usepackage{amsmath}               
\usepackage{amsfonts}              
\usepackage{amsthm}                
\usepackage{xkeyval}               
\usepackage{amssymb}
\usepackage{enumerate}             
\usepackage{color}
\usepackage{breqn}                 
\usepackage{bm}                    

\allowdisplaybreaks[1]
\usepackage{nicefrac}
\usepackage{booktabs}

\usepackage{kpfonts}
\usepackage{stackengine}
\usepackage{calc}
\newlength\shlength
\newcommand\xshlongvec[2][0]{\setlength\shlength{#1pt}%
	\stackengine{-5.6pt}{$#2$}{\smash{$\kern\shlength%
			\stackengine{7.55pt}{$\mathchar"017E$}%
			{\rule{\widthof{$#2$}}{.57pt}\kern.4pt}{O}{r}{F}{F}{L}\kern-\shlength$}}%
	{O}{c}{F}{T}{S}}

\usepackage{hyperref}
\hypersetup{
	colorlinks   = true,    
	citecolor    = red      
}

\newcommand{\RN}[1]{%
	\textup{\uppercase\expandafter{\romannumeral#1}}%
}

\allowdisplaybreaks

\newtheorem{thm}{Theorem}[subsection]

\newtheorem{lem}[thm]{Lemma}

\newtheorem{cor}[thm]{Corollary}

\newtheorem{example}[thm]{Example} 
\newtheorem{remark}[thm]{Remark}

\def\<{\langle}
\def\>{\rangle}

\numberwithin{equation}{section}

\newcommand{\abs}[1]{\lvert#1\rvert}

\begin{document}


\title{Quantum Detection of Sequency-Band Structure}

\author[1]{Alok Shukla \thanks{Corresponding author.}}
	\author[2]{Prakash Vedula}
	\affil[1]{School of Arts and Sciences, Ahmedabad University, India}
	\affil[1]{alok.shukla@ahduni.edu.in}
	\affil[2]{School of Aerospace and Mechanical Engineering, University of Oklahoma, USA}
	\affil[2]{pvedula@ou.edu}
	

	\date{}
	
	\maketitle

\begin{abstract}

We present a quantum algorithm for estimating the amplitude content of user-specified sequency bands in quantum-encoded signals. The method employs a sequency-ordered Quantum Walsh--Hadamard Transform (QWHT), a comparator-based oracle that coherently marks basis states within an arbitrary sequency range, and Quantum Amplitude Estimation (QAE) to estimate the total probability mass in the selected band. This enables the detection of structured signal components, including both high- and low-sequency features, as well as the identification of rapid sign-change behavior associated with noise or anomalies.
The proposed method can be embedded as a module within a larger quantum algorithm; in this setting, both the input and output remain fully quantum, enabling seamless integration with upstream and downstream quantum operations.
We show that the sequency-ordered QWHT can be implemented with circuit depth $O(\log_2 N)$ (equivalently $O(n)$ for $N=2^n$) when acting on an amplitude-encoded quantum state, whereas computing the full Walsh--Hadamard spectrum of an explicit length-$N$ classical signal requires $O(N\log_2 N)$ operations via the fast Walsh--Hadamard transform. 
This results in an exponential quantum advantage when the QWHT is used as a modular block within a larger quantum algorithm, relative to classical fast Walsh–Hadamard transform–based approaches operating on explicit data.
From an application perspective, the proposed sequency band-energy estimation may be interpreted as a structure-based anomaly indicator, enabling the detection of unexpected high-sequency components relative to a nominal low-sequency signal class. The algorithm is applicable to quantum-enhanced signal processing tasks such as zero-crossing analysis, band-limited noise estimation, and feature extraction in the Walsh basis.
\end{abstract}

\section{Introduction}\label{sec:intro}
In recent years, quantum computing has matured into a powerful algorithmic framework, supported by a growing body of quantum algorithms that demonstrate provable advantages over classical methods for structured computational tasks \cite{nielsen2002quantum}. This includes Deutsch–Jozsa algorithm \cite{deutsch1992rapid}, the Bernstein–Vazirani algorithm \cite{bernstein1993quantum} and its probabilistic generalization \cite{shukla2023generalizationBV}, Simon’s algorithm \cite{simon1997power}, quantum search algorithm \cite{grover1996fast, shukla2025efficientGrover}, quantum Fourier transform , quantum phase estimation \cite{kitaev1995quantum, dobvsivcek2007arbitrary,brassard2002quantum,     shukla2025towardsQPE}, quantum amplitude estimation \cite{brassard2002quantum, giurgica2022low, shukla2025modularQAE} and Shor’s polynomial-time algorithms for integer factorization and discrete logarithms and its variants \cite{shor1999polynomial, shukla2025modularShor}. Quantum algorithms have since been developed for a wide range of scientific and engineering applications, including trajectory optimization \cite{shukla2019trajectory}, the solution of large-scale linear systems \cite{harrow2009quantum}, linear and nonlinear differential equations \cite{childs2020quantum,berry2014high,shukla2023hybriddiffeq}, and digital image, signal processing and spectral analysis \cite{shukla2023hybridimageprocessing,shukla2023quantumDSP,rohida2024hybridPolar,rohida2025quantumEdgeDetect}, efficient state preparation \cite{gleinig2021efficient,  mozafari2020automatic, shukla2024efficientUSP, rosenkranz2025quantum}, quantum optimal control \cite{sandesara2024quantumOptimalcontrol} and numerical integration \cite{shimada2020quantum, abrams1999fast, shukla2025efficientPartialSum}. This paper contributes to this broader landscape by developing a fully quantum, sequency-resolved noise-detection primitive based on the sequency-ordered Quantum Walsh-Hadamard Transform (QWHT) and amplitude estimation.

Taken together, these developments underscore the increasing relevance of quantum algorithms for practical computational tasks, extending well beyond purely complexity-theoretic demonstrations.

Spectral analysis plays a central role in many scientific and engineering disciplines, including signal processing, image analysis, communications, and biomedical diagnostics. Detecting the energy content in specific frequency or sequency bands is critical for identifying structured noise patterns, abrupt transitions, or meaningful signal features. Classical approaches typically rely on transforms such as the Discrete Fourier Transform (DFT) or the Walsh-Hadamard Transform (WHT), followed by band selection and energy estimation. The Walsh basis, particularly in its sequency-ordered form, is closely tied to the number of sign changes (or zero-crossings) in basis functions, making it a natural tool for such analysis.

Quantum computing provides new tools to address these tasks more efficiently in high-dimensional settings. In this work, we propose a quantum algorithm that performs band-specific sequency analysis by leveraging the sequency-ordered Quantum Walsh-Hadamard Transform (QWHT), a custom comparator-based oracle to select basis states in a target sequency band, and amplitude estimation to quantify their contribution. Unlike prior approaches limited to global spectral properties or requiring classical postprocessing, our method enables fully quantum estimation of amplitudes in arbitrary Walsh bands.

A classical technique for assessing signal frequency content is the count of zero-crossings in the signal. Zero-crossings, corresponding to sign changes in sampled data, are a well-known heuristic for detecting high-frequency behavior. Moreover, the Walsh-Hadamard Transform (WHT) in its sequency-ordered form, provides a direct spectral basis in which high-frequency components correspond to high-sequency indices. Quantum computing offers the possibility of analyzing signal structure more efficiently in high-dimensional spaces.

\subsection*{Main Contributions} In this paper, we present a quantum algorithm that addresses the zero-crossing detection problem from a fully quantum perspective. First, we implement the Quantum Walsh-Hadamard Transform in sequency order using a low-depth quantum circuit that avoids classical postprocessing. We then design a comparator-based quantum oracle that efficiently flags those basis states whose sequency indices lie within a user-specified high-frequency band. This is followed by the application of Quantum Amplitude Estimation, which allows us to estimate the probability mass associated with the flagged sequency components. Crucially, this estimated amplitude is shown to be directly related to the number of zero-crossings in the original signal. As a result, the entire procedure constitutes a new quantum algorithm for zero-crossing detection that is both rigorous and resource-efficient.

The proposed algorithm is broadly applicable to several quantum-enhanced tasks, including signal and image processing, biomedical signal analysis (such as EEG and ECG), spectral sparsity detection in compressed sensing, and feature selection in quantum machine learning pipelines. We support our theoretical development with explicit circuit constructions, oracle design details, and numerical simulations on small quantum systems. Altogether, our results demonstrate how spectral properties of classical signals can be extracted efficiently in a quantum computational framework.

In this work, we also present a quantum circuit for performing the Walsh-Hadamard transforms in sequency ordering. We note that usually the Walsh-Hadamard transform in natural order appears in quantum algorithms (for example, Deutsch-Jozsa algorithm \cite{deutsch1992rapid}, Bernstein–Vazirani algorithm \cite{bernstein1993quantum}, Simon's algorithm \cite{simon1997power}, Grover's algorithm \cite{grover1996fast}, Shor's Algorithm \cite{shor1999polynomial}, etc.), often to get a uniform superposition of quantum states at the beginning of the quantum algorithm.
The Walsh-Hadamard transform in natural order and the associated Walsh basis functions in natural order have also been used in the solution of non-linear ordinary differential equations \cite{shukla2023hybriddiffeq}.
On the other hand, the Walsh basis functions and the Walsh-Hadamard transforms in sequency ordering have found applications in several domains in engineering, for example, digital image and signal processing~\cite{kuklinski1983fast, zarowski1985spectral}, cryptography~\cite{lu2016walsh}, solution of non-linear ordinary differential equations and partial differential equations~\cite{beer1981walsh,ahner1988walsh,gnoffo2014global, gnoffo2015unsteady, gnoffo2017solutions}.
In image processing applications, it is often desired to use  Walsh basis functions in sequency ordering because of better energy compaction properties.
For instance, the hybrid classical-quantum algorithm approach for image processing discussed in \cite{shukla2023hybridimageprocessing} uses the Walsh-Hadamard transforms in sequency ordering.
We note that in \cite{shukla2022quantumzerocrossing}, the quantum Walsh-Hadamard transform is performed in natural order, followed by classical computations to obtain the transforms in sequency ordering.
In this and other applications wherein the Walsh-Hadamard transform in sequency ordering is needed, it is desirable to have a quantum circuit to compute the Walsh-Hadamard transforms directly in sequency ordering.

This article is organized as follows. Section~\ref{sec:notation} introduces the notation used throughout the paper. In Section~\ref{Sec:zero-crossing-problem}, we formalize the zero-crossing counting problem. Section~\ref{sec:Sequency-ordered-WH-transforms} presents a quantum circuit for implementing the Walsh--Hadamard transform in sequency ordering. Numerical simulation results validating the proposed framework are reported in Section~\ref{sec:numerical-simulations}. Finally, Section~\ref{sec:conclusion} concludes the paper and discusses future directions.

\subsection{Notation} \label{sec:notation}Before proceeding further, we define some convenient notations used in the rest of the paper.  
	\begin{itemize}
		\item $ \oplus $ : $ x \oplus y $ will denote $ x + y \mod 2 $.
	    \item $ s \cdot x $ : For $ s = s_{n-1}\,s_{n-2}\,\ldots \, s_1\, s_0 $ and $ x =  x_{n-1}\,x_{n-2} \,\ldots \, x_1\, x_0 $ with $ s_i,\, x_i \in \{0,1\}$, $ s \cdot x $ will denote the bit-wise dot product of  $s $ and $ x $ modulo $ 2 $, i.e.,  $s \cdot x :=  s_{0}x_{0} + s_{1}x_{1}+ \ldots + s_{n-1}x_{n-1} \pmod 2 $.
	    \item  $ s(m) $ : For  $ s =  s_{n-1}\,s_{n-2}\,\ldots\, s_2\,s_1\,s_0 $ and $ 1 \leq m \leq n $, $ s(m) $ denotes the string formed by keeping only the $ m $ least significant bits of $ s $, i.e., $ s(m) = s_{m-1}\,s_{m-2}\,\ldots\, s_2\,s_1\,s_0 $. 
	    \item On a few occasions,  by abuse of notations, a non-negative integer $ s $, such that  $ s = \sum_{j=0}^{n-1} \, s_j 2^j $, will be used to represent the $ n $-bit string
	    $ s_{n-1}\,s_{n-2}\,\ldots\, s_2\,s_1\,s_0 $.   
\end{itemize}

\section{Zero-crossings counting problem} \label{Sec:zero-crossing-problem}
Let the function $ {\displaystyle F \colon \{0,1\}^{n}\rightarrow \{1,-1\}} $ be defined as $ F(x) = (-1)^{f(x)}$,  where $ f(x)= s \cdot x $ for some fixed string $ s \in \{0,1\}^{n}  $. The number of zero-crossings (i.e., sign changes) for the sequence 
		\begin{equation}\label{eq:def_sequence}
			\mathcal{S} = 	\left(\, F(0), \, F(1),\, F(2),\, \cdots \cdots ,\, F(2^n-1) \,\right)
		\end{equation}
is defined as 	\begin{equation}\label{Eq:defn_zero_crossings}
	\frac{1}{2} \sum_{k=0}^{N-2} \, \abs{   F(k+1) - F(k)}
	=  \frac{1}{2} \sum_{k=0}^{N-2} \, \abs{   (-1)^{s \cdot (k+1)} - (-1)^{s \cdot k}},
\end{equation}
where $ N =2^n $. 
The numbers of zero-crossings for sequences associated with different secret strings (for $ n=3 $) are listed in Table \ref{tab:zero-crosings-table}.
\begin{table}[]
	\centering
	\begin{tabular}{@{}ccc@{}}
		\toprule
		Secret string, s & Sequence $ \mathcal{S} = ( \, F(k) \, )_{k=0}^{7} $ & Number of zero-crossings (sign changes) \\ \midrule
		$ 000 $	& $ \displaystyle  \left( \, 1, \, 1, \, 1, \, 1, \, 1, \, 1, \, 1, \, 1        \, \right)$  &     $ 0 $                                  \\
		$ 001 $	& $ \displaystyle  \left( \,  1, -1,  1, -1,  1, -1,  1, -1 \, \right) $  &     $ 7 $                                  \\
		$ 010 $	& $ \displaystyle  \left(\,  1,  1, -1, -1,  1,  1, -1, -1 \, \right) $  &      $ 3 $                                \\
		$ 011 $	& $ \displaystyle  \left( \, 1, -1, -1,  1,  1, -1, -1,  1 \, \right) $  &      $ 4 $                                 \\
		$ 100 $	& $ \displaystyle  \left( \, 1,  1,  1,  1, -1, -1, -1, -1 \, \right) $  &      $ 1 $                                 \\
		$ 101 $	& $ \displaystyle  \left( \, 1, -1,  1, -1, -1,  1, -1,  1 \, \right) $  &      $ 6 $                                \\
		$ 110 $	& $ \displaystyle  \left( \, 1,  1, -1, -1, -1, -1,  1,  1 \, \right) $  &      $ 2 $                                 \\
		$ 111 $	& $ \displaystyle  \left( \, 1, -1, -1,  1, -1,  1,  1, -1   \, \right) $  &     $ 5 $                                     \\ \bottomrule
	\end{tabular}
	\caption{The table shows the number of zero-crossings (sign changes) for the sequence $ ( \, F(k) \, )_{k=0}^{7} $, with $ F(k) = (-1)^{s\cdot x} $, for $ s =  000, \, 001,\, \ldots,\, 111 $. }
	\label{tab:zero-crosings-table}
\end{table}
The computation of the number of zero-crossings for the sequence associated with the secret string  $ s=101 $ is illustrated in Ex.\,\,\!$ \ref{ex:one}$.
\begin{example} \label{ex:one}
	Let $ n = 3 $ and $ s = 101 $ (or equivalently $ s = 5$). 
	Clearly, 
	\begin{align*}
		s\cdot 0 &= (101)\cdot (000) = (1 \times 0 ) \, \oplus \,  (0 \times 0 ) \, \oplus \,  (1 \times 0 ) \,  =  0, \\
		s\cdot 1 &= (101)\cdot (001) = (1 \times 0 ) \, \oplus \,  (0 \times 0 ) \, \oplus \,  (1 \times 1 ) \,   =  1, \\
		s\cdot 2 &= (101)\cdot (010) = (1 \times 0 ) \, \oplus \,  (0 \times 1 ) \, \oplus \,  (1 \times 0 ) \,   =  0, \\
		\cdots & \cdots \\
		s\cdot 7 &= (101)\cdot (111) = (1 \times 1) \, \oplus \,   (0 \times 1) \, \oplus \,   (1 \times 1)  \, = 0. 
	\end{align*}
	Then $$ \mathcal{S} = \left((-1)^{s\cdot 0}, \,(-1)^{s\cdot 1},\, (-1)^{s\cdot 2},\, \ldots \,,\,   (-1)^{s\cdot 7} \right) =  \left( \, 1, -1,  1, -1, -1,  1, -1,  1 \, \right) .$$ 
	Let $ N= 2^n = 8 $. The number of zero-crossings of the sequence $ \mathcal{S} $ is given by 
	\[
	\frac{1}{2} \sum_{k=0}^{N-2} \, \abs{   (-1)^{s \cdot (k+1)} - (-1)^{s \cdot k}} = 
	\frac{1}{2} \left( \abs{-1 -1 } + \abs{1 - (-1)} + \abs{-1 - 1} +  \abs{-1 - (-1)} + \abs{1 - (-1)}  + \abs{-1 -1} + \abs{1 - (-1)} \right) = 6.
	\]
\end{example}

\subsection{A technical lemma} \label{sec:Lemma} 
Let 
\begin{equation}\label{eq:g}
	g = \sum_{j=0}^{n-1} \, g_j 2^j,
\end{equation}
and the bits of $ g $ are given by
\begin{align*}
	g_{n-1} & = \, s_0, \\
	g_{n-2} & =  \, s_0 \, \oplus \, s_1, \\ 
	\cdots&  \,\cdots \\
	g_1 &=  \, s_0 \, \oplus \, s_1 \, \oplus \, s_2 \, \oplus \, \ldots  \, \oplus \, s_{n-2}, \\
	g_0 &= \, \, s_0 \, \oplus \, s_1 \, \oplus \, s_2 \, \oplus \, \ldots  \, \oplus \, s_{n-2} \, \oplus \, s_{n-1}.
\end{align*}  
It means,
\begin{align}\label{eq:gk}
	g_{n-1} = s_0,  \quad \text{and} \quad	g_k &= \, \, s_0 \, \oplus \, s_1 \, \oplus \, \ldots   \, \oplus \, s_{n-1-k}, \quad \text{for $ k = n-2 $ to $ 0 $}.
\end{align}
Finally, all the qubits (except the ancilla qubit) are measured and stored in a classical register. The classical bits resulting from these measurements are labeled $ g_0 $ to $ g_{n-1} $ in \mfig{fig:sequency}. Indeed,  $ g $ is number of zero-crossings of the sequence $ \mathcal{S} $ (as defined in \meqref{eq:def_sequence}). 
We recall that, the number of zero-crossings of the sequence $ \mathcal{S} $ (as defined in \meqref{eq:def_sequence}) is given by 
\[
\frac{1}{2} \sum_{k=0}^{N-2} \, \abs{   (-1)^{s \cdot (k+1)} - (-1)^{s \cdot k}},
\]
as noted in  \meqref{Eq:defn_zero_crossings}.
It follows from Corollary \ref{cor} that $ g = Z_n(s) = \frac{1}{2} \sum_{k=0}^{N-2} \, \abs{   (-1)^{s \cdot (k+1)} - (-1)^{s \cdot k}}$ is the number of number of zero-crossings of the sequence $ \mathcal{S} $ that we wanted to determine.

 In this section, we will prove Lemma~\ref{lemma} and its corollary (Corollary \ref{cor}) to show that  $ g = \sum_{j=0}^{n-1} \, g_j 2^j $ (see  \meqref{eq:g}  and  \meqref{eq:gk}) gives the number of of zero-crossings in the sequence $ \mathcal{S} $ given by 
 \[
 \mathcal{S} = 	\left(\, F(0), \, F(1),\, F(2),\, \cdots \cdots ,\, F(2^n-1) \,\right),
 \]
 with $ F(x) = (-1)^{f(x)} $ and $ f(x) = s \cdot x $ (ref.\,\,\!\meqref{eq:def_sequence}). 
 
\begin{lem} \label{lemma}
	For an integer $ m $ with $ 1 < m \leq n $, and for 
	$ x = x_{n-1}\,x_{n-2} \,\ldots \, x_1\, x_0 $ (or equivalently, $ x $ with a decimal representation  $  x = \sum_{j=0}^{m-1} \, x_j 2^j  $),
	 with $ x_j \in \{0,1\} $,   define $ Z_m(x) $ as
	\begin{equation}\label{eq:def_Z_k}
		Z_m(x) := \frac{1}{2} \sum_{k=0}^{2^m-2} \, \abs{   (-1)^{x \cdot (k+1)} - (-1)^{x \cdot k}}.
	\end{equation}
Then, we have 
\begin{equation}\label{eq:lemma}
Z_m(s(m)) = 2 Z_{m-1} (s(m-1)) + (s_0 \, \oplus \, s_1 \, \oplus \, s_2 \, \oplus \, \ldots \, \oplus \,  s_{m-1}).	
\end{equation}
Here $ s(m) = s_{m-1}\,s_{m-2}\,\ldots \, s_1\, s_0 $
and $ s_0 \, \oplus \, s_1 \, \oplus \, s_2 \, \oplus \, \ldots \, \oplus \,  s_{m-1}= (s_0 + s_1 + s_2 + \ldots + s_{m-1}) \pmod 2 $. (ref.\,\,\!\mref{sec:notation}). 
\end{lem}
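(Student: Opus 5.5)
We prove the identity by splitting the sum defining $Z_m(s(m))$ into the first half-block of indices, the second half-block, and the single boundary term between them. Write $M=2^{m-1}$. Every $k$ with $0\le k\le 2^m-1$ is uniquely $k = b\,M + j$ with $b\in\{0,1\}$ and $0\le j\le M-1$. Hence the bitwise dot product factors as
\[
s(m)\cdot k = b\, s_{m-1} \oplus \bigl(s(m-1)\cdot j\bigr),
\qquad\text{so}\qquad
(-1)^{s(m)\cdot k} = (-1)^{b s_{m-1}}(-1)^{s(m-1)\cdot j}.
\]
The sum over $k=0,\dots,2^m-2$ of $\tfrac12\abs{(-1)^{s(m)\cdot(k+1)}-(-1)^{s(m)\cdot k}}$ then separates into three pieces.

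\emph{First half.} For $k=0,\dots,M-2$ we have $b=0$, and the terms are exactly those defining $Z_{m-1}(s(m-1))$.

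\emph{Second half.} For $k=M,\dots,2M-2$, both $k$ and $k+1$ have $b=1$. The common factor $(-1)^{s_{m-1}}$ has absolute value one and pulls out of $\abs{\cdot}$, so this block again contributes $Z_{m-1}(s(m-1))$.

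\emph{Boundary.} The remaining term is $k=M-1$, $k+1=M$. It equals $1$ exactly when $(-1)^{s(m)\cdot(M-1)}\neq(-1)^{s(m)\cdot M}$, i.e. it equals $\bigl(s(m)\cdot(M-1)\bigr)\oplus\bigl(s(m)\cdot M\bigr)$.

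Computing the boundary term is the only step needing care. The bit string of $M-1$ is $0\,1\cdots1$, i.e. bits $0$ through $m-2$ set. Therefore
\[
s(m)\cdot(M-1)=s_0\oplus\cdots\oplus s_{m-2}.
\]
The bit string of $M$ has only bit $m-1$ set, so $s(m)\cdot M = s_{m-1}$. Their XOR is $s_0\oplus s_1\oplus\cdots\oplus s_{m-1}$. Adding the three pieces gives $Z_m(s(m)) = 2Z_{m-1}(s(m-1)) + (s_0\oplus\cdots\oplus s_{m-1})$, which is the claimed identity.

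I expect the main obstacle to be bookkeeping rather than substance. One must verify that the index ranges $0..M-2$, $M..2M-2$, and the single index $M-1$ together cover $0..2^m-2$ exactly once. One must also confirm that the definition of $Z_{m-1}$ in \meqref{eq:def_Z_k} uses the same $(m-1)$-bit dot product with $j<2^{m-1}$, so that the factorization above is valid.
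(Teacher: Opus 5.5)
Your proposal is correct and follows essentially the same route as the paper: split the sum at the boundary index $2^{m-1}-1$, show that each half contributes $Z_{m-1}(s(m-1))$ (pulling the unit factor $(-1)^{s_{m-1}}$ out of the absolute value for the upper half), and evaluate the boundary term as $s_0\oplus\cdots\oplus s_{m-1}$. The differences are cosmetic: you write $M=2^{m-1}$ where the paper uses $M=2^m$, and you state the factorization $k=bM+j$ explicitly at the start, whereas the paper justifies the reduction to $(m-1)$-bit dot products at the end of its proof.
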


\begin{proof}
Let $ M = 2^m $. We put $ x=s(m) $ and split the summation on the right side of  \meqref{eq:def_Z_k}  into three different parts as follows.
\begin{align}	\label{eq:lem_all_terms}
	&Z_m(s(m)) \nonumber \\ &= \frac{1}{2} \left(\sum_{k=0}^{\frac{M}{2}-2} \, \abs{   (-1)^{s(m) \cdot (k+1)} - (-1)^{s(m) \cdot k}}\right) + \frac{1}{2} \abs{  (-1)^{s(m) \cdot \left(\frac{M}{2}\right)} - (-1)^{s(m) \cdot \left(\frac{M}{2} -1 \right) }} + \frac{1}{2} \sum_{k=\frac{M}{2}}^{M-2} \, \abs{   (-1)^{s(m) \cdot (k+1)} - (-1)^{s(m) \cdot k}}.
\end{align} 
We note that $ \frac{M}{2} = 2^{m-1}$ represents the $ m $-bit string $ 10\ldots0 $. Therefore, $ s(m) \cdot \left(\frac{M}{2}\right) = s_{m-1} $. Similarly,  $ \frac{M}{2} -1 = 2^{m-1}-1$ represents the $ m $-bit string $ 011\ldots1 $, hence  $ s(m) \cdot \left(\frac{M}{2} -1 \right) =  s_0 \, \oplus \, s_1 \, \oplus \, s_2 \, \oplus \, \ldots \, \oplus \,  s_{m-2}$.
Therefore, the middle term in the above summation reduces to 
\begin{align}\label{eq:lem_second_term}
	\frac{1}{2} \abs{  (-1)^{s(m) \cdot \left(\frac{M}{2}\right)} - (-1)^{s(m) \cdot \left(\frac{M}{2} -1 \right) }} & = \frac{1}{2} \abs{  (-1)^{s_{m-1}} - (-1)^{ s_0 \, \oplus \, s_1 \, \oplus \, \ldots \, \oplus \,  s_{m-2} }} \nonumber \\ & = s_0 \, \oplus \, s_1 \, \oplus \, s_2 \, \oplus \, \ldots \, \oplus \,  s_{m-1}.
\end{align}
Next we show that the last and the first terms are equal. 
\begin{align}\label{eq:lem_last_term}
\frac{1}{2} \sum_{k=\frac{M}{2}}^{M-2} \, \abs{   (-1)^{s(m) \cdot (k+1)} - (-1)^{s(m) \cdot k}} & = 	\frac{1}{2} \sum_{k=0}^{\frac{M}{2}-2} \, \abs{   (-1)^{s(m) \cdot (\frac{M}{2} + k+1)} - (-1)^{s(m) \cdot (\frac{M}{2} + k)}}  \nonumber \\
  & =   \frac{1}{2} \sum_{k=0}^{\frac{M}{2}-2} \, \abs{   (-1)^{s_{m-1}}  \left((-1)^{s(m) \cdot (  k+1)} - (-1)^{s(m) \cdot (k)}  \right)  } \nonumber \\
  & =  \frac{1}{2} \sum_{k=0}^{\frac{M}{2}-2} \, \abs{    \left((-1)^{s(m) \cdot (  k+1)} - (-1)^{s(m) \cdot (k)}  \right)  }. 
\end{align}
From \meqref{eq:lem_all_terms}, \meqref{eq:lem_second_term} and \meqref{eq:lem_last_term} it follows that
\begin{align}
Z_m(s(m)) & = \left( \sum_{k=0}^{\frac{M}{2}-2} \, \abs{   (-1)^{s(m) \cdot (k+1)} - (-1)^{s(m) \cdot k}}\right) + (s_0 \, \oplus \, s_1 \, \oplus \, \ldots \, \oplus \,  s_{m-1}) \nonumber \\
& =  \left( \sum_{k=0}^{\frac{M}{2}-2} \, \abs{   (-1)^{s(m-1) \cdot (k+1)} - (-1)^{s(m-1) \cdot k}}\right) + (s_0 \, \oplus \, s_1 \, \oplus \, \ldots \, \oplus \,  s_{m-1}).  
\end{align}
The last step follows because as $ k $ runs through $ 0$ to  $ \frac{M}{2} -2 = 2^{m-1} -2 $, the computations of  $ s(m) \cdot (k+1) $ and $ s(m) \cdot k $  involve only the $ m-1 $ least significant bits of $ s $, allowing one to write $ s(m) \cdot (k+1) = s(m-1) \cdot (k+1) $ and $ s(m) \cdot k = s(m-1) \cdot k $. 
Hence, we obtain
\begin{align*}
Z_m(s(m)) = 2 Z_{m-1} (s(m-1)) + (s_0 \, \oplus \, s_1 \, \oplus \, s_2 \, \oplus \, \ldots \, \oplus \,  s_{m-1}),
\end{align*}
and the proof is complete.
\end{proof}

\begin{cor}\label{cor}
	Let  $ s = s_{n-1}\,s_{n-2}\,\ldots \, s_1\, s_0 $ with $ s_j \in \{0,1\} $. If $ 	Z_n(s) $ is defined as 
	\[
		Z_n(s) := \frac{1}{2} \sum_{k=0}^{2^n-2} \, \abs{   (-1)^{s \cdot (k+1)} - (-1)^{s \cdot k}},
	\]
	then 
	\begin{equation}\label{eq:cor}
		Z_n(s) = \sum_{k=0}^{n-1} \, g_k 2^k,
	\end{equation}
where $ g_{n-1} = s_0 $ and  $ g_k =  s_0 \, \oplus \, s_1 \, \oplus \, \ldots \,  \oplus \, s_{n-1-k}\,$ for $ k=n-2$ to $ k=0 $.
\end{cor}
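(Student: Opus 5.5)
The plan is to iterate the recursion of Lemma~\ref{lemma} down from $m=n$ to $m=1$, unrolling it into a sum of powers of two. Write $p_m := s_0 \oplus s_1 \oplus \ldots \oplus s_{m-1}$ for $1 \le m \le n$. Lemma~\ref{lemma} gives $Z_m(s(m)) = 2Z_{m-1}(s(m-1)) + p_m$ for $2 \le m \le n$.

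\textbf{Base case.} We first handle $m=1$ directly from the definition. For $m=1$ the sum in the definition of $Z_1$ has a single term, $k=0$, so
\[
Z_1(s(1)) = \tfrac12\abs{(-1)^{s_0}-1} = s_0 = p_1 .
\]

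\textbf{Unrolling.} Next we prove by induction on $m$ that
\[
Z_m(s(m)) = \sum_{j=1}^{m} p_j\, 2^{m-j}.
\]
The base case $m=1$ is the computation above. For the inductive step, substitute the hypothesis into the recursion:
\[
Z_m(s(m)) = 2\sum_{j=1}^{m-1} p_j 2^{m-1-j} + p_m = \sum_{j=1}^{m} p_j 2^{m-j}.
\]

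\textbf{Reindexing.} Take $m=n$. Since $s(n)=s$, this gives $Z_n(s)=\sum_{j=1}^n p_j 2^{n-j}$. Now set $k=n-j$, so $k$ runs from $n-1$ down to $0$, and the coefficient of $2^k$ is $p_{n-k} = s_0\oplus\ldots\oplus s_{n-1-k}$. This is exactly $g_k$ as defined in \meqref{eq:gk}. In particular, $k=n-1$ gives $p_1=s_0=g_{n-1}$, which matches the separately stated case.

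\textbf{Main obstacle.} There is no real difficulty; the only delicate point is the base case. Lemma~\ref{lemma} is stated only for $1<m\le n$, so $Z_1$ has to be evaluated directly rather than taken from the lemma, and care is needed with the indices when matching $p_{n-k}$ to $g_k$. If $n=1$, the statement reduces to the base case alone.
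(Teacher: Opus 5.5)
Your proposal is correct and follows the same route as the paper. Both iterate the recursion of Lemma~\ref{lemma} from $m=n$ down to the base case $Z_1(s(1))=s_0$ and then read off the binary digits $g_k=s_0\oplus\cdots\oplus s_{n-1-k}$. You also supply the induction the paper skips, and you rightly evaluate $Z_1(s(1))$ directly from the definition, since the lemma only covers $1<m\le n$ and the paper attributes this base case to it somewhat loosely.
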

\begin{proof}
	It is easy to see from  \meqref{eq:lemma} that 
	\begin{align*}
		Z_1(s(1)) &  = s_0  = g_{n-1}\\ 
		Z_2(s(2)) & = 2 Z_1(s(1)) + (s_0 \, \oplus \,  s_1) = 2 g_{n-1} + g_{n-2} \\ 
		Z_3(s(3)) & = 2 Z_2(s(2)) + (s_0 \, \oplus \,  s_1 \, \oplus s_2) = 2^2 g_{n-1} + 2 g_{n-2} + g_{n-3}. 
	\end{align*}
A simple induction argument, whose details we skip, and the observation that $s(n) = s$  (see \mref{sec:notation}), shows that
\[
	Z_n(s) = \sum_{k=0}^{n-1} \, g_k 2^k.
\]
This completes the proof. 
\end{proof}
Clearly, $ Z_n(s) $ computes the number of zero-crossings of the sequence $ \mathcal{S} $ (ref.\,\,\!\meqref{eq:def_sequence} and  \meqref{Eq:defn_zero_crossings}).
In the following, we give an example to illustrate the steps for computing $ Z_n(s) $.
\begin{example}\label{ex:two}
	Let $ n = 3 $ and $ s = 5 $ (or equivalently $ s = 101 $ as a binary string). 
We have $$ \mathcal{S} = \left((-1)^{s\cdot 0}, \,(-1)^{s\cdot 1},\, (-1)^{s\cdot 2},\, \ldots \,  ,\, (-1)^{s\cdot 7} \right) =  \left( \, 1, -1,  1, -1, -1,  1, -1,  1 \, \right) .$$ 
Let $ N= 2^n = 8 $. 
As $ s = 101 $, we have $ s_0 = 1 $, $ s_1 =0 $ and $ s_2 = 1 $. The number of zero-crossings of the sequence $ \mathcal{S} $ is given by 
\begin{align}\label{eq:eaxample_one}
	Z_3(s) = Z_3(s(3)) = 2 Z_2 (s(2)) + (s_0 \, \oplus \, s_1 \, \oplus \, s_2) = 2 Z_2(s(2)) + (1 \, \oplus \, 0 \, \oplus \, 1) = 2 Z_2(s(2)).
\end{align}
We have, $ s(2) = 01 $. Therefore,
\begin{equation}\label{eq:example_two}
	Z_2(s(2)) = 2 Z_1(s(1)) + (s_0 \, \oplus \, s_1 \,) = 2 Z_1(s_0) + ( 1 \, \oplus \, 0 \,) = 2 (1) + 1 = 3.
\end{equation}
From \meqref{eq:eaxample_one}  and  \meqref{eq:example_two}  we get $ Z_3(s)) = 6 $, which is the same result that we obtained in Ex.\,\,\!$ \ref{ex:one} $. 
\end{example}

\section{Sequency ordered Walsh-Hadamard transforms} \label{sec:Sequency-ordered-WH-transforms}
In this section, we will briefly recall the Walsh basis functions in sequency and natural ordering. We will also discuss the Walsh-Hadamard transforms in sequency and natural ordering and describe a quantum circuit for performing the Walsh-Hadamard transforms in sequency ordering. Interested readers may refer to \cite{beauchamp1975walsh} for further details on Walsh basis functions, Walsh-Hadamard transforms, and their applications. 

\subsection{Walsh basis functions in sequency and natural ordering}
Walsh basis functions $ W_k (x) $ for $  k =0,~1,~2, ~\ldots~ N-1 $  in sequency order are defined as follows
\begin{align}
	W_0(x) &= 1 \quad \text{for } 0 \leq x \leq 1,  \\
	W_{2k} (x) &= W_k(2x) + (-1)^k W_k (2x -1 ),  \\
	W_{2k+1} (x) &= W_k(2x) - (-1)^k W_k (2x -1 ), \\
	W_k(x) &= 0 \quad \text{for } x < 0 \text{ and } x >1,
\end{align}
where $ N $ is an integer of the form $ N = 2^n$. 
For $ N =8 $ the Walsh functions in sequency order are shown in \mfig{fig_walsh_sequency}. 

\begin{figure}[htbp]
	\begin{center}
			\includegraphics[trim=100 225 100 225, clip]{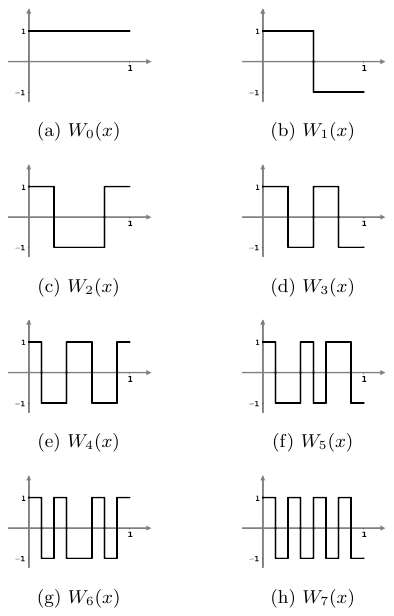}
		\caption{Walsh basis functions in the sequency ordering for $ N=8 $.} 	\label{fig_walsh_sequency}
	\end{center}
\end{figure}

The number of sign changes (or zero-crossings) for the Walsh functions increases as the orders of the functions increase. A vector of length $ N $ can be obtained by sampling a Walsh basis function. The Walsh-Hadamard transform matrix in sequency order is obtained by arranging the vectors obtained from sampling the Walsh basis functions as the rows of a matrix. The vectors are arranged in increasing order of sequency. The Walsh-Hadamard transform matrix of order $ N=8 $ in sequency order is
\begin{align*}
	\frac{1}{\sqrt{8}} \,
	\begin{pmatrix*}[r]
		1 & 1 & 1 & 1 & 1 & 1 & 1 & 1  \\
		1 & 1 & 1 & 1 & -1 & -1 & -1 & -1  \\
		1 & 1 & -1 & -1 & -1 & -1 & 1 & 1  \\
		1 & 1 & -1 & -1 & 1 & 1 & -1 & -1  \\
		1 & -1 & -1 & 1 & 1 & -1 & -1 & 1  \\
		1 & -1 & -1 & 1 & -1 & 1 & 1 & -1  \\
		1 & -1 & 1 & -1 & -1 & 1 & -1 & 1  \\
		1 & -1 & 1 & -1 & 1 & -1 & 1 & -1  \\
	\end{pmatrix*}.
\end{align*} 
In contrast, the Walsh-Hadamard transform matrix of order $ N =2^n $ in natural order is given by  
\begin{align*}
	H_N = H^{\otimes n},
\end{align*}  
where 
\begin{align*}
	H = \frac{1}{\sqrt{2}} \, \begin{pmatrix*}[r]
		1 & 1  \\
		1 & -1  \\
	\end{pmatrix*}.
\end{align*}
and $ N = 2^n $. 
\begin{figure}[htbp]
	\begin{center}
			\includegraphics[trim=100 225 100 225, clip]{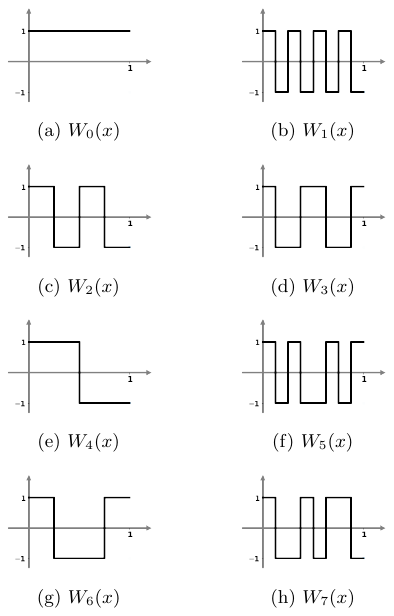}
		\caption{Walsh basis functions in the natural ordering for $ N=8 $.}	\label{fig_walsh_natural}
	\end{center}
\end{figure}
The Walsh-Hadamard matrix in natural order for $ N=8 $ is 
\begin{align*}
		\frac{1}{\sqrt{8}} \,
	\begin{pmatrix*}[r]
		1 & 1 & 1 & 1 & 1 & 1 & 1 & 1  \\
		1 & -1 & 1 & -1 & 1 & -1 & 1 & -1  \\
		1 & 1 & -1 & -1 & 1 & 1 & -1 & -1  \\
		1 & -1 & -1 & 1 & 1 & -1 & -1 & 1  \\
		1 & 1 & 1 & 1 & -1 & -1 & -1 & -1  \\
		1 & -1 & 1 & -1 & -1 & 1 & -1 & 1  \\
		1 & 1 & -1 & -1 & -1 & -1 & 1 & 1  \\
		1 & -1 & -1 & 1 & -1 & 1 & 1 & -1  \\
	\end{pmatrix*}.
\end{align*}
Walsh basis functions (or Hadamard-Walsh basis functions) in natural order can be obtained using the rows of the Walsh-Hadamard matrix $ H_N$. For $ N =8 $, the Walsh functions in natural order are shown in \mfig{fig_walsh_natural}.

The Walsh-Hadamard transforms in natural and sequency orders can also be defined in terms of their actions on the computational basis vectors. 
Let $ N=2^n $ be a positive integer. Let $ V$ be the $ N $ dimensional complex vector space generated by the computational basis states $ \{ \ket{0}, \, \ket{1}, \, \ldots \,,\, \ket{N-1} \} $.  We note that the Walsh-Hadamard transform in natural order can be defined as a linear transformation $ H_N : V \to V $ such that the action of $ H_N = H^{\otimes n}$ on the computational basis state $ \ket{j} $, with $ 0 \leq j \leq N-1 $ is given by
\begin{equation}\label{eq:natual:hadamard}
	H_N \, \ket{j} = \frac{1}{\sqrt{N}} \sum_{k=0}^{N-1} \, (-1)^{k \cdot j} \, \ket{k}. 
\end{equation}
Here $ k \cdot j $ denotes bit-wise dot product of $ k $ and $ j$. 

Next, we note that the Walsh-Hadamard transform in sequency order can be defined as a linear transformation $ H_S : V \to V $ acting on the  basis state $ \ket{j} $, with $ 0 \leq j \leq N-1 $, as follows (see \cite{beauchamp1975walsh}). 
\begin{equation}\label{eq:sequency:hadamard}
	H_S \, \ket{j} = \frac{1}{\sqrt{N}} \sum_{k=0}^{N-1} \, (-1)^{ \sum_{r=0}^{n-1} \, k_{n-1-r} (j_r \oplus j_{r+1}) } \, \ket{k},
\end{equation}
where $ k = k_{n-1}\,k_{n-2}\,\ldots \, k_1\, k_0 $ and $ j =  j_{n-1}\,j_{n-2} \,\ldots \, j_1\, j_0 $, are binary representations of $ k $ and $ j $ respectively, with $ k_i,\, j_i \in \{0,1\} $ for $ i=0,\,1,\, \ldots ,\,n-1$, and $ j_{n} = 0 $.

\subsection{Quantum circuit for performing the sequency ordered Walsh-Hadamard transforms}%

\begin{figure}
	\centering
	\includegraphics{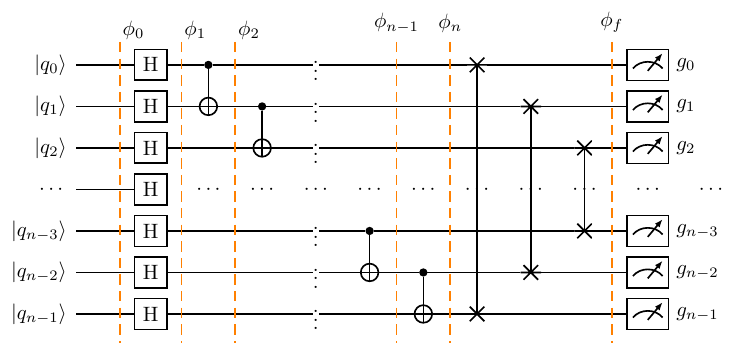}
	\caption{Quantum circuit for computing the Walsh-Hadamard transform in sequency ordering.} \label{fig:sequency}
\end{figure}

A schematic quantum circuit for performing the Walsh-Hadamard transform in sequency order is shown 
In order to show that the quantum circuit in \mfig{fig:sequency} computes the Walsh-Hadamard transforms in sequency ordering, we will compute the output when the input is a computational basis state $ \ket{\phi_0} =  \ket{j} $, with $ 0 \leq j  \leq N-1 $ and $ N =2^n $. 
We have 
\begin{align}
	\ket{\phi_1 } = H^{\otimes n} \ket{\phi_0} &=  \frac{1}{\sqrt{N}}  \sum_{s=0}^{N-1}   \, (-1)^{j \cdot s} \, \ket{s},  \nonumber \\
	&= \frac{1}{\sqrt{N}}  \sum_{s=0}^{N-1}   \,  \, (-1)^{ \sum_{k=0}^{n-1} \, j_k s_k} \, \ket{s}. \label{eq:phif}
\end{align}
Similar to our discussion in \mref{sec:Lemma}, the remaining part of the circuit acts on $ \ket{s} $ to give $ \ket{g} $, where $ g = \sum_{k=0}^{n-1} \, g_k 2^k $ and 
$ g_{n-1} = s_0 $, $  g_k = \, s_0 \, \oplus \, s_1 \, \oplus \, \ldots   \, \oplus \, s_{n-1-k}$, for $ k = n-2, \, \ldots, \, 1,\, 0 $ (see  \meqref{eq:gk}  and \meqref{eq:g}). 
Further, it is easy to see that 
\begin{equation}\label{eq:stog}
	s_k = g_{n-k} \, \oplus \, g_{n-k-1}, \quad k=0,\, 1,\, \ldots,\, n-1,  
\end{equation}
where it is assumed that $ g_n = 0 $. Also, the map sending $ \ket{s} $ to $ \ket{g} $ defined in (see \meqref{eq:gk}  and  \meqref{eq:g}) is invertible. It can be easily checked and it also follows from the fact that the transformation that sends  $ \ket{s} $ to $ \ket{g} $ is implemented by unitary quantum gates. Therefore, the summation that runs through $ s=0 $ to $ s= N-1 $ in   \meqref{eq:phif}  can be replaced by the summation running through $ g =0 $ to $ g=N-1 $. It follows that
\begin{align} \label{eq:phif_final}
	\ket{\phi_f }  &= \frac{1}{\sqrt{N}}  \sum_{g=0}^{N-1}   \, (-1)^{\sum_{k=0}^{n-1} \, j_k (g_{n-k}  \, \oplus \,  g_{n-k-1} )}   \, \ket{g}  \nonumber \\
	&= \frac{1}{\sqrt{N}}  \sum_{g=0}^{N-1}   \, (-1)^{\sum_{r=0}^{n-1} \, j_{n-1-r} (g_{r} \,  \oplus \, g_{r+1} )}   \, \ket{g}. 
\end{align}
From  \meqref{eq:sequency:hadamard} and  \meqref{eq:phif_final}  it follows that the quantum circuit shown in \mfig{fig:sequency} can be used for computing the Walsh-Hadamard transform in sequency order.

\section{Quantum Algorithm for High-Sequency Noise Detection}
\label{sec:algorithm}

In this section, we present a quantum algorithm designed to detect high-sequency noise components in quantum-encoded signals. Such components often correspond to rapid variations or abrupt transitions in the underlying classical signal, and their detection is vital in various applications, including signal processing, communication systems, and biomedical diagnostics. High-sequency noise, represented in the Walsh-Hadamard basis, is characterized by non-negligible amplitudes associated with high-sequency indices. These indices correspond to the number of zero-crossings in the respective Walsh functions. Our objective is to estimate the amplitude contribution of a specified high-sequency band in a given quantum state.

Given a normalized test signal encoded as a quantum state, our method applies a quantum circuit to perform a sequency-ordered Quantum Walsh-Hadamard Transform (QWHT), followed by a band-selective oracle that flags the presence of amplitudes in the target sequency range. 
Finally, depending on the intended use of the algorithm, the flagged
sequency-band information may either be retained coherently for further
quantum processing or used to estimate the corresponding probability
mass via quantum amplitude estimation.

\subsection{Problem Statement}
Let $N = 2^n$ denote the dimension of the Hilbert space for an $n$-qubit system. Given a normalized quantum state $\ket{\psi_{\text{test}}}$ representing a test signal, our goal is to estimate the probability mass $P_{[a,a+M)}$ in a specific high-sequency band defined by integers $a$ and $M$ such that $0 \le a < a+M \le N$. The target probability mass is given by:
\[
P_{[a,a+M)} = \sum_{j=a}^{a+M-1} \left| \langle j | \psi_{\text{sequency}} \rangle \right|^2,
\]
where $\ket{\psi_{\text{sequency}}} = \mathcal{W} \ket{\psi_{\text{test}}}$ denotes the sequency-domain representation obtained by applying the QWHT unitary $\mathcal{W}$.

\subsection{Overall Algorithm Approach}
The algorithm proceeds through three key stages. First, we transform the
input state into the sequency basis by applying the QWHT, thereby aligning
the state representation with Walsh basis functions ordered by sequency.
Second, we apply a comparator-based oracle $U_S$ that coherently marks
basis states whose indices fall within the desired high-sequency band.
Third, depending on a classical control flag, the algorithm either
retains the resulting flagged quantum state for use as a subroutine
within a larger quantum computation or applies quantum amplitude
estimation to obtain a classical estimate of the probability mass
associated with the flagged sequency band.

\subsection{Detailed Oracle Design ($U_S$)}
\label{sec:oracle_design}

The oracle $U_S$ marks computational basis states $\ket{j}$ lying in the sequency range $[a, a+M)$. This marking uses two quantum comparator circuits: one to test whether $j \ge a$, and another to test whether $j < a+M$. Each comparator uses a flag ancilla qubit to store the result of the inequality.

To check the lower bound, we apply a quantum comparator $C_{\ge}(Q, a)$ which flips a temporary ancilla $Q_{\text{temp1}}$ to $\ket{1}$ if the index $j \ge a$. For the upper bound, we use $C_{<}(Q, a+M)$ to flip another ancilla $Q_{\text{temp2}}$ to $\ket{1}$ if $j < a+M$. These two conditions are then combined via a Toffoli gate with $Q_{\text{temp1}}$ and $Q_{\text{temp2}}$ as controls and a primary flag qubit $Q_{\text{flag}}$ as the target. This marks the main ancilla as $\ket{1}$ if and only if both conditions are satisfied.

To preserve unitarity, the comparators are then uncomputed, restoring $Q_{\text{temp1}}$ and $Q_{\text{temp2}}$ to $\ket{0}$. The oracle thus acts as:
\[
U_S \ket{i}\ket{0}_{\text{flag}} = \begin{cases} \ket{i}\ket{1}_{\text{flag}} & \text{if } a \le i < a+M \\ \ket{i}\ket{0}_{\text{flag}} & \text{otherwise} \end{cases}
\]
This operation coherently encodes membership in the target sequency band
into the flag qubit and may be used either as a control for subsequent
quantum operations or as the input to a quantum amplitude estimation
routine. In Algorithm~\ref{alg:noise_detection}, this behavior is controlled by the
classical Boolean flag \textsc{Estimate}.

\subsection{Amplitude Estimation for Probability Mass}
This step is executed only when a classical estimate of the sequency-band
probability mass is required.
After applying the oracle $U_S$ to the sequency-transformed state $\ket{\psi_{\text{sequency}}}\ket{0}_{\text{flag}}$, the overall system evolves as:
\begin{align*}
U_S \ket{\psi_{\text{sequency}}}\ket{0}_{\text{flag}} &= \sum_{j=a}^{a+M-1} \langle j | \psi_{\text{sequency}} \rangle \ket{j}\ket{1}_{\text{flag}} 
+ \sum_{j \notin [a, a+M)} \langle j | \psi_{\text{sequency}} \rangle \ket{j}\ket{0}_{\text{flag}}.
\end{align*}

The probability of measuring the flag qubit $Q_{\text{flag}}$ in state $\ket{1}$ is equal to the sum of squared amplitudes in the target band:
\begin{equation}
    \Pr(Q_{\text{flag}}=1) = \sum_{j=a}^{a+M-1} \left| \langle j | \psi_{\text{sequency}} \rangle \right|^2 = P_{[a,a+M)}.
\end{equation}

Using QAE, we estimate this quantity efficiently. The final output of the algorithm is an estimate $P_{\text{est}} \approx P_{[a,a+M)}$, representing the total probability mass of the specified high-sequency components in the input signal.

\begin{algorithm}[H]
\SetAlgoLined
\SetNlSty{}{}{}
\SetAlgoNlRelativeSize{-1}
\SetInd{0.2em}{0.4em}
\DontPrintSemicolon

\KwInput{
A normalized quantum state $\ket{\psi_{\text{test}}}$ on $n$ qubits, stored in quantum register $Q$;
Integer $a \in [0,2^n-1]$ (starting sequency index);
Integer $M \in [1,2^n-a]$ (number of consecutive sequency states);
Boolean flag \textsc{Estimate} $\in \{\texttt{true},\texttt{false}\}$.
}

\KwOutput{
If \textsc{Estimate} $=\texttt{false}$: flag qubit $Q_{\text{flag}}$ marking sequency-band membership;\\
If \textsc{Estimate} $=\texttt{true}$: classical estimate $P_{\text{est}}$ of the probability mass in $[a,a+M)$.
}

\tcc{Step 1: State preparation}
Initialize the $n$-qubit register $Q$ in state $\ket{\psi_{\text{test}}}$.

\tcc{Step 2: Sequency-basis transformation via QWHT}
Apply the sequency-ordered Quantum Walsh--Hadamard Transform (QWHT):
\begin{enumerate}[(i)]
\item Apply Hadamard gates $H$ to each qubit in $Q$.
\item For $k=0$ to $n-2$, apply a CNOT gate from $Q_k$ to $Q_{k+1}$, with control being $Q_k$ and target $Q_{k+1}$.
\item Apply SWAP gates to reverse qubit order for correct sequency indexing.
\end{enumerate}

\tcc{Step 3: Sequency-band oracle $U_S$ for $[a,a+M)$}
Introduce ancilla qubits:
\begin{itemize}
\item Flag qubit $Q_{\text{flag}}$ initialized to $\ket{0}$,
\item Temporary ancillas $Q_{\text{temp1}}$ and $Q_{\text{temp2}}$ initialized to $\ket{0}$.
\end{itemize}

Implement $U_S$ as follows:
\begin{enumerate}[(i)]
\item Apply comparator $C_{\ge}(Q,a)$, setting $Q_{\text{temp1}}=1$ iff $Q \ge a$.
\item Apply comparator $C_{<}(Q,a+M)$, setting $Q_{\text{temp2}}=1$ iff $Q < a+M$.
\item Apply a Toffoli gate with controls $(Q_{\text{temp1}},Q_{\text{temp2}})$ and target $Q_{\text{flag}}$.
\item Uncompute $Q_{\text{temp2}}$ using $C_{<}^\dagger(Q,a+M)$.
\item Uncompute $Q_{\text{temp1}}$ using $C_{\ge}^\dagger(Q,a)$.
\end{enumerate}

\noindent
Effect: (Ignoring temporary ancillas)
\[
U_S \ket{i}\ket{0}_{\text{flag}} =
\begin{cases}
\ket{i}\ket{1}_{\text{flag}}, & i \in [a,a+M),\\
\ket{i}\ket{0}_{\text{flag}}, & \text{otherwise}.
\end{cases}
\]

\If{\textsc{Estimate} $=\texttt{false}$}{
\KwRet{Flag qubit $Q_{\text{flag}}$ (coherent sequency-band indicator).}
}
\Else{
\tcc{Step 4: Amplitude estimation (conditional)}
Apply a suitable amplitude estimation algorithm (e.g., MLQAE~\cite{suzuki2020amplitude}
or AWQAE~\cite{shukla2025modularQAE}) using:
State preparation from Step~2, Oracle $U_S$ from Step~3,
and Flag qubit $Q_{\text{flag}}$.

Let $P_{\text{est}}$ denote the estimated probability
$\Pr(Q_{\text{flag}}=1)$.

\KwRet{$P_{\text{est}}$.}
}
\caption{Quantum algorithm for the detection of sequency-band structure}
\label{alg:noise_detection}
\end{algorithm}

\begin{remark}
\label{rem:optional-qae}
Step~4 (Amplitude Estimation) is optional and is included only when a
classical estimate of the sequency-band probability mass is required.
If the proposed procedure is used as a coherent subroutine within a
larger quantum algorithm, the output of Step~3 may be retained in
quantum form without performing amplitude estimation or measurement.
In this case, the flag qubit $Q_{\mathrm{flag}}$ coherently encodes
membership in the sequency band $[a,a+M)$ and can be used as a control
or input for subsequent quantum operations, such as conditional
unitaries, amplitude amplification, or further oracle calls.
\end{remark}

\subsection{Computational Complexity}

We now analyze the computational resources required by the proposed algorithm.
Throughout this discussion, $N = 2^n$ denotes the size of the signal, while $n$
is the number of qubits used to represent it.

The sequency-ordered Quantum Walsh--Hadamard Transform (QWHT) is implemented by
applying a standard $n$-qubit Hadamard transform followed by a cascade of CNOT
and SWAP gates that map the natural ordering to sequency ordering. This
construction requires $O(n)$ Hadamard gates, $O(n)$ CNOT gates, and $O(n)$ SWAP
gates, resulting in an overall circuit depth of $O(n)$ for the QWHT stage.

The oracle $U_S$ consists of two integer comparator circuits acting on
$n$-qubit registers, each requiring $O(n)$ quantum gates and $O(n)$ ancilla
qubits. A Toffoli gate is used to combine the comparator outputs, and the
ancilla registers are subsequently uncomputed to preserve reversibility.
Consequently, the oracle has circuit depth $O(n)$ and requires $O(n)$ ancillary
qubits.

Classically, detection of the sequency band may be
performed deterministically by computing the full discrete Walsh--Hadamard
Transform (DWHT) of a length-$N$ signal. Using fast Walsh--Hadamard transform
algorithms, this requires $O(N \log N) = O(n  2^n)$ time complexity.

When the quantum input state corresponding to a length-$N$ signal can be
prepared efficiently using $O(\mathrm{poly}(n))$ quantum operations, the
proposed algorithm offers a significant advantage for high-dimensional signals
($N >> 1$) and moderate accuracy requirements. We emphasize that the claimed
advantage applies to probabilistic estimation of sequency-band energy and
relies on the assumption of efficient quantum state preparation.

\section{Numerical Simulations}
\label{sec:numerical-simulations}

This section presents numerical simulations validating the correctness of the proposed sequency-ordered Quantum Walsh--Hadamard Transform (QWHT) and the associated sequency band-energy estimation procedure. All simulations are performed using noiseless statevector evolution for small system sizes. The objective of these simulations is to verify consistency with the theoretical analysis.

Unless stated otherwise, simulations are carried out for $n=3$ data qubits, corresponding to $N = 2^n = 8$ computational basis states. This system size is sufficient to capture the essential features of the algorithm while allowing exact classical verification of all results. We consider several representative input signals, including smooth,
low-sequency signals,  highly oscillatory and structured-noise cases, to
illustrate the behavior of the algorithm across different regimes.

\paragraph*{Simulation methodology.}

For a real-valued input signal $\mathbf{x} = (x_0, x_1, \ldots, x_{N-1})$, we prepare the normalized quantum state $\ket{\psi_{\mathrm{in}}} = \sum_{j=0}^{N-1} x_j \ket{j}$, with $\sum_j |x_j|^2 = 1$. Applying the sequency-ordered QWHT yields the state $\ket{\psi_{\mathrm{seq}}} = U_{\mathrm{QWHT}} \ket{\psi_{\mathrm{in}}} = \sum_{k=0}^{N-1} c_k \ket{k}$.

Due to the explicit Gray-code mapping and bit-reversal operations implemented in the circuit, the computational basis states $\ket{k}$ at the output of the QWHT correspond directly to sequency indices. The quantity $|c_k|^2$ therefore represents the contribution of sequency index $k$ to the total signal energy.

To validate the oracle-based band selection, the sequency indices are partitioned into three disjoint bands $[0,b)$, $[b,b+M)$, and $[b+M,N)$. For the simulations that follow we have used $b=2$ and $M=3$. For any band $[u,v)$, the associated band energy is $P_{[u,v)} = \sum_{k=u}^{v-1} |c_k|^2$. This probability mass is exactly the quantity targeted by amplitude estimation in the full algorithm and is extracted directly from the statevector in the present simulations. 

\paragraph*{Constant (DC) input signal.}

Figure~\ref{fig:dc-signal} shows the numerical results for a constant input signal. The time-domain plot confirms the absence of sign changes. The sequency-domain spectrum exhibits a single nonzero component at the lowest sequency index, and the band-energy histogram shows that all probability mass is concentrated in the lowest sequency band.

\begin{figure}[t]
\centering

\begin{subfigure}[t]{0.5\linewidth}
\centering
\includegraphics[width=\linewidth]{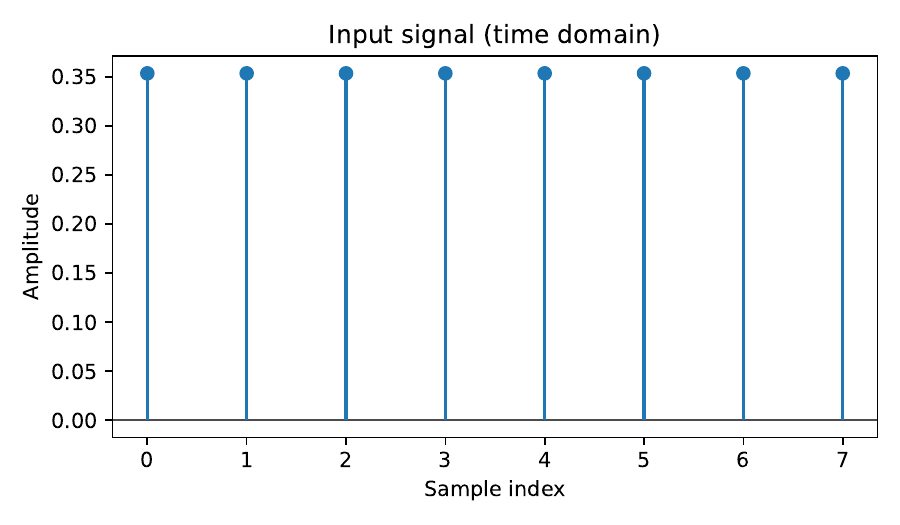}
\caption{Time-domain signal}
\end{subfigure}

\vspace{0.8em}

\begin{subfigure}[t]{0.49\linewidth}
\centering
\includegraphics[width=\linewidth]{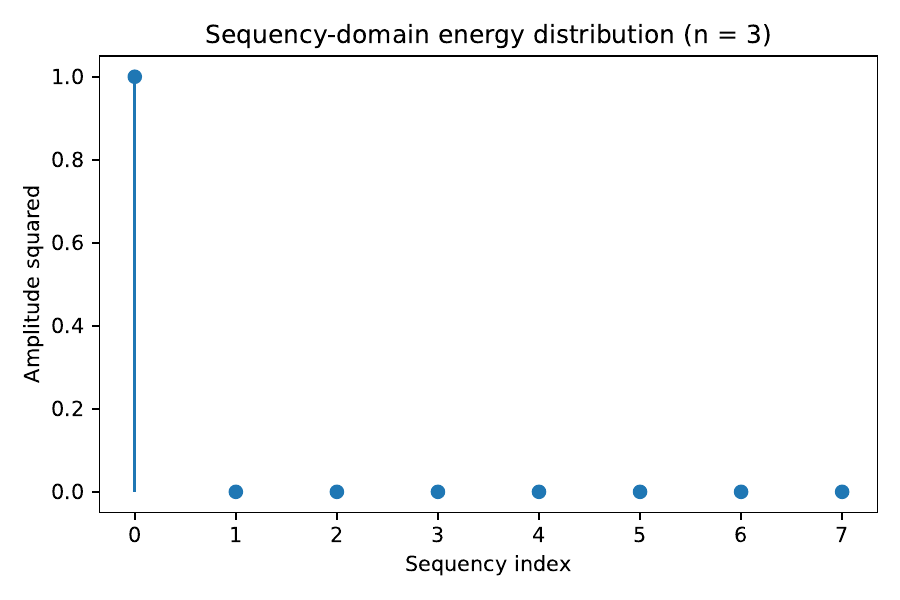}
\caption{Sequency spectrum}
\end{subfigure}
\hfill
\begin{subfigure}[t]{0.49\linewidth}
\centering
\includegraphics[width=\linewidth]{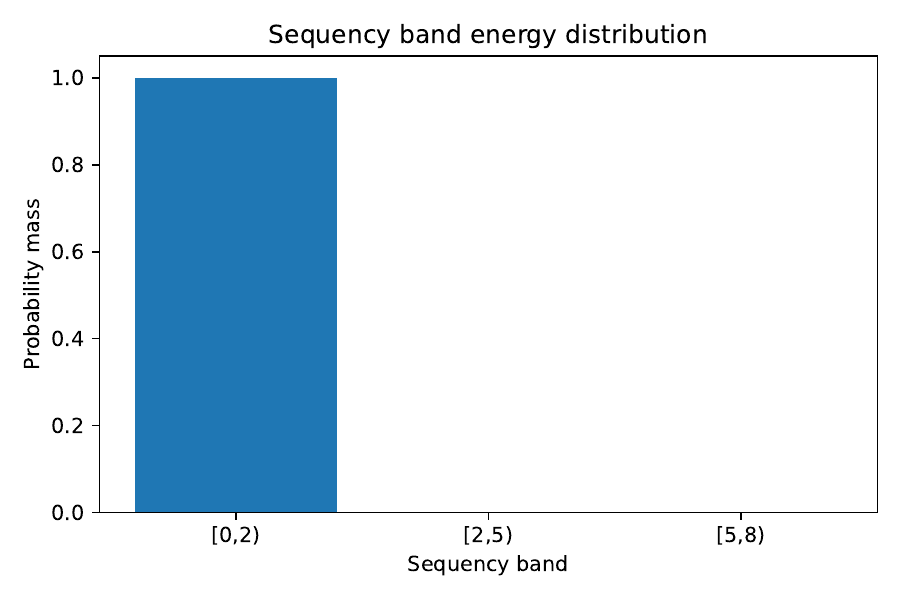}
\caption{Band-energy histogram}
\end{subfigure}

\caption{Numerical results for the constant (DC) input signal.}
\label{fig:dc-signal}
\end{figure}

\paragraph*{Piecewise-constant (edge-like) input signal.}

Figure~\ref{fig:edge-signal} presents the results for a piecewise-constant signal containing a single sharp transition. The time-domain plot highlights one sign change. The sequency-domain spectrum shows energy distributed over low and intermediate sequency indices, while the band-energy histogram confirms that the probability mass is shared between the corresponding sequency bands.

\begin{figure}[t]
\centering

\begin{subfigure}[t]{0.5\linewidth}
\centering
\includegraphics[width=\linewidth]{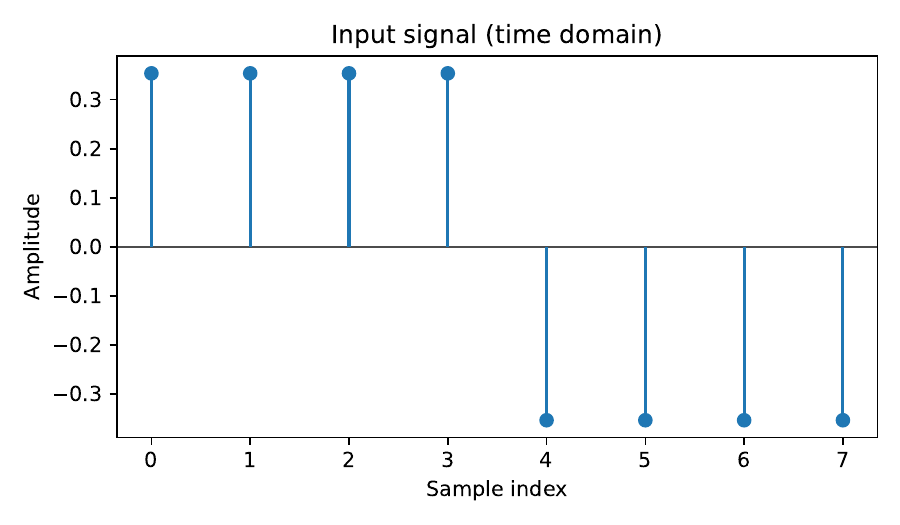}
\caption{Time-domain signal}
\end{subfigure}

\vspace{0.8em}

\begin{subfigure}[t]{0.49\linewidth}
\centering
\includegraphics[width=\linewidth]{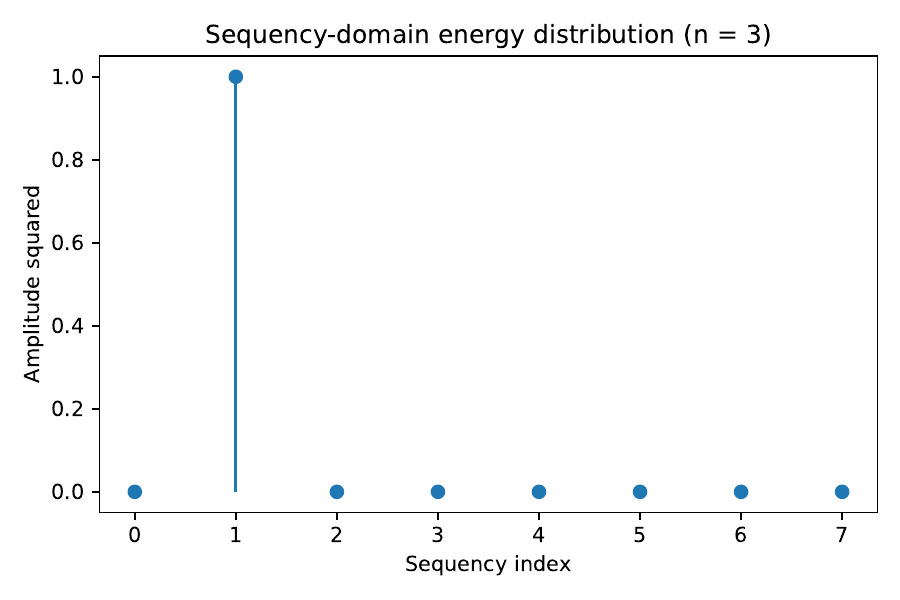}
\caption{Sequency spectrum}
\end{subfigure}
\hfill
\begin{subfigure}[t]{0.49\linewidth}
\centering
\includegraphics[width=\linewidth]{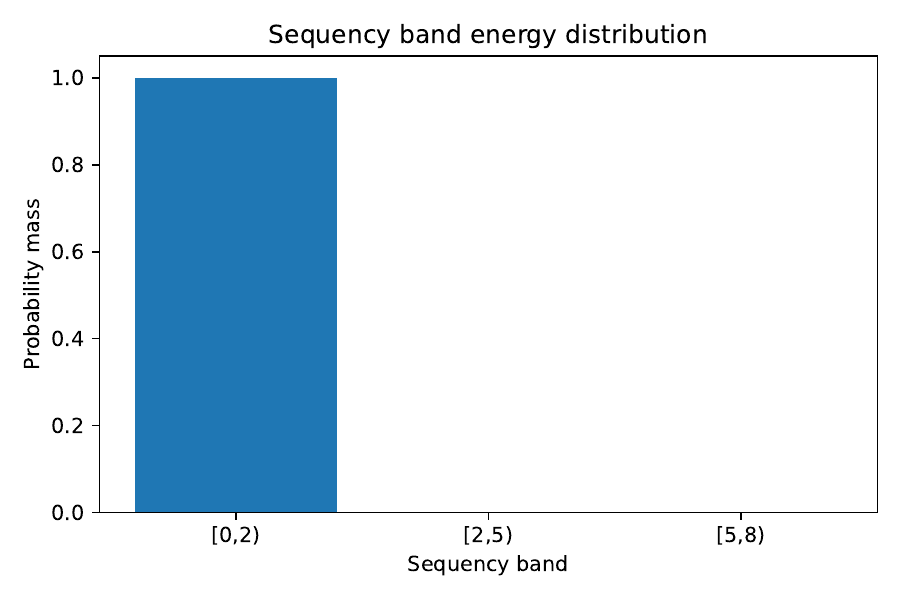}
\caption{Band-energy histogram}
\end{subfigure}

\caption{Numerical results for the piecewise-constant (edge-like) input signal.}
\label{fig:edge-signal}
\end{figure}

\paragraph*{Alternating-sign input signal.}

Figure~\ref{fig:alternating-signal} shows the results for an alternating-sign input signal. The time-domain plot exhibits the maximal number of sign changes. Correspondingly, the sequency-domain spectrum is dominated by high-sequency components, and the band-energy histogram shows that most of the probability mass lies in the high-sequency band.

\begin{figure}[t]
\centering

\begin{subfigure}[t]{0.5\linewidth}
\centering
\includegraphics[width=\linewidth]{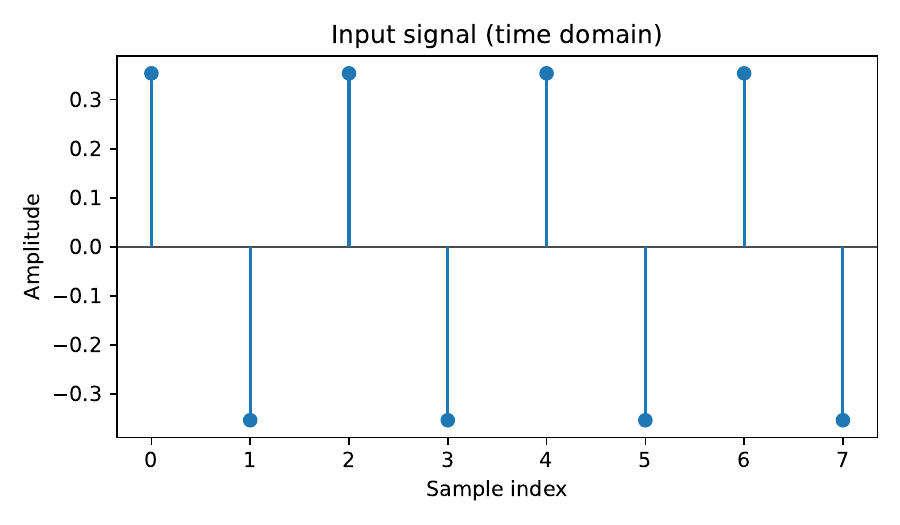}
\caption{Time-domain signal}
\end{subfigure}

\vspace{0.8em}

\begin{subfigure}[t]{0.49\linewidth}
\centering
\includegraphics[width=\linewidth]{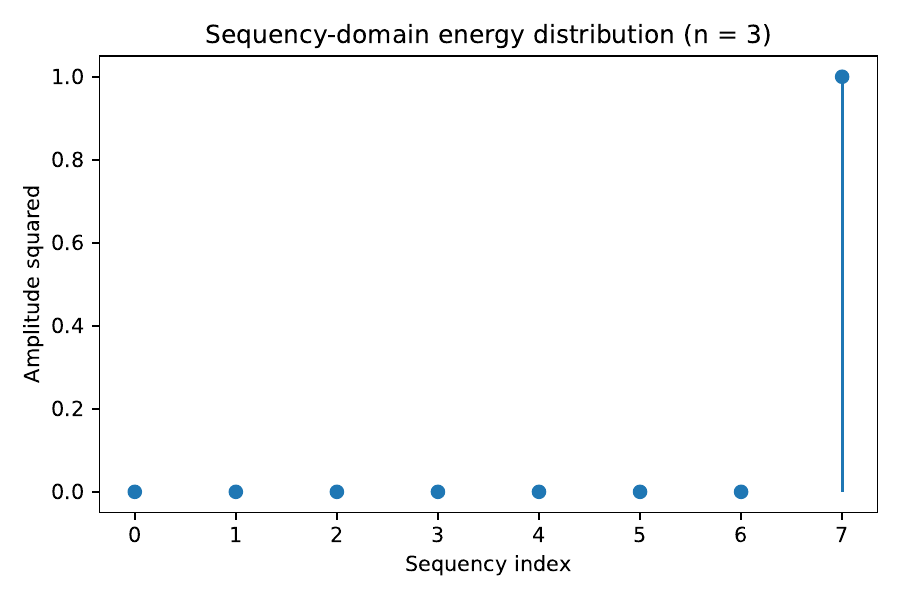}
\caption{Sequency spectrum}
\end{subfigure}
\hfill
\begin{subfigure}[t]{0.49\linewidth}
\centering
\includegraphics[width=\linewidth]{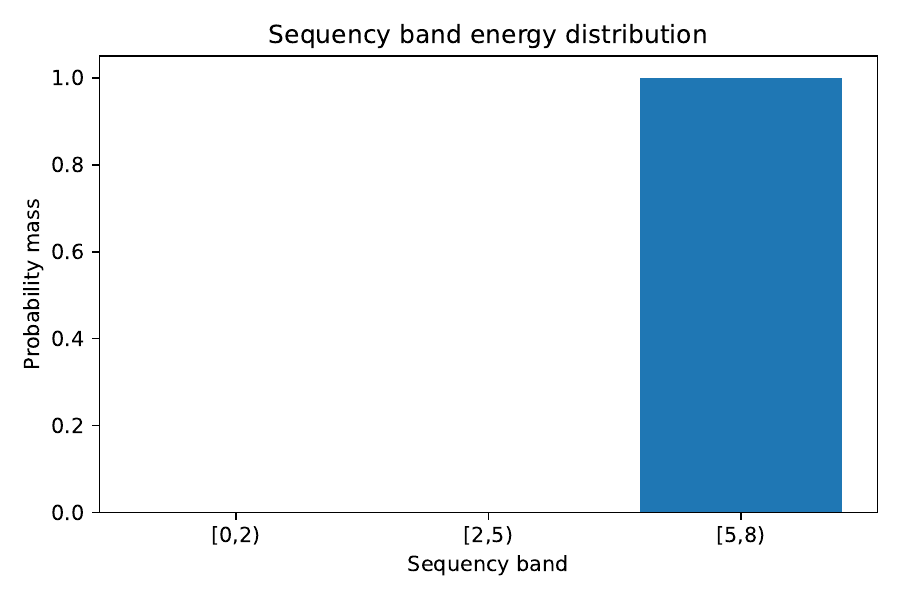}
\caption{Band-energy histogram}
\end{subfigure}

\caption{Numerical results for the alternating-sign input signal.}
\label{fig:alternating-signal}
\end{figure}

\paragraph*{Summary and Discussion.}

Across all tested signals, the numerical simulations show a consistent correspondence between time-domain structure, sequency-domain energy spectra, and sequency band-energy distributions. These results confirm that the proposed QWHT correctly implements sequency ordering, that computational basis states at the output can be interpreted directly as sequency indices, and that the oracle-based procedure accurately estimates aggregate sequency band energy.

Beyond its role in estimating aggregate sequency content, the proposed band-energy estimation procedure admits a natural interpretation as a form of structure-based anomaly detection in the sequency domain. For a given class of nominal signals characterized by predominantly low-sequency structure, the appearance of significant probability mass in higher sequency bands indicates a deviation from the expected signal behavior. In this sense, high-sequency noise or rapid sign-change structure manifests as an anomaly that can be detected by monitoring the estimated band energies. We emphasize that the present work does not address learning-based, adaptive, or statistical anomaly detection, but rather provides a deterministic, oracle-based metric that is directly tied to the physical structure of the signal.

\section{Conclusion}
\label{sec:conclusion}

In this work, we introduced a fully quantum framework for sequency-resolved
signal analysis based on the sequency-ordered Quantum Walsh--Hadamard Transform
(QWHT). By combining a low-depth implementation of the QWHT with a
comparator-based, coherently reversible sequency-band oracle, we showed how to
isolate and quantify the contribution of user-specified sequency bands in
quantum-encoded signals. When a classical estimate is required, the marked
sequency information can be converted into a probability estimate using Quantum
Amplitude Estimation (QAE); alternatively, the output may be retained coherently
and used directly as a subroutine within a larger quantum algorithm.

A key outcome of this work is the demonstration that the sequency-ordered QWHT
can be implemented with circuit depth $O(\log_2 N)$ (equivalently $O(n)$ for
$N=2^n$) on an amplitude-encoded quantum state. In contrast, classical approaches
that explicitly compute the Walsh--Hadamard spectrum of a length-$N$ signal
require $O(N\log_2 N)$ operations. When the proposed method is embedded as a
module within a larger quantum computation, so that both input and output remain
quantum, this represents an exponential advantage.

From an application perspective, sequency-band energy estimation provides a
structure-based indicator of rapid sign-change behavior, closely related to
classical notions of zero-crossings and high-frequency noise. This makes the
proposed algorithm relevant to quantum-enhanced signal processing tasks such as
band-limited noise detection, edge and anomaly detection, and feature extraction
in the Walsh basis. The numerical simulations presented for small system sizes
confirm the correctness of the sequency ordering, the interpretation of output
basis states as sequency indices, and the correspondence between time-domain
structure and sequency-domain energy concentration.

Several directions remain open for future work. These include experimental
realization on near-term quantum hardware, optimization of comparator and oracle
circuits for noise-resilient implementations, and integration with variational or
hybrid quantum–classical schemes for adaptive noise suppression and signal
classification. More broadly, the techniques developed here, particularly
sequency-ordered transforms and band-selective quantum oracles, provide a
reusable primitive for structured spectral analysis within the growing
landscape of quantum signal processing.


\end{document}